\title{Superdeterminism Without Conspiracy}
\author{Tim Palmer \\ Department of Physics, University of Oxford, UK}
\date{\today}                                          
\newcommand\be{\@ifstar{\[}{\begin{equation}}}
\newcommand\ee{\@ifstar{\]}{\end{equation}}}
\newcommand\bp{\begin{pmatrix}}
\newcommand\ep{\end{pmatrix}}
\newtheorem*{theorem*}{Theorem}
\newtheorem*{corollary}{Corollary}
\begin{document}
\bibliographystyle{plain}
\maketitle

\begin{abstract}

Superdeterminism - where the Measurement Independence assumption in Bell's Theorem is violated - is frequently assumed to imply implausibly conspiratorial correlations between properties $\lambda$ of particles being measured and measurement settings $x$ and $y$. But it doesn't have to be: a superdeterministic but non-conspiratorial locally causal model is developed where each pair of entangled particles has unique $\lambda$. The model is based on a specific but arbitrarily fine discretisation of complex Hilbert space, where $\lambda$ defines the information, over and above the freely chosen nominal settings $x$ and $y$, which fixes the exact measurement settings $X$ and $Y$ of a run of a Bell experiment. Pearlean interventions, needed to assess whether $x$ and $y$ are Bell-type free variables, are shown to be inconsistent with rational-number constraints on the discretised Hilbert states. These constraints limit the post-hoc freedom to vary $x$ keeping $\lambda$ and $y$ fixed but disappear with any coarse-graining of $\lambda$, $X$ and $Y$, rendering so-called drug-trial conspiracies irrelevant. Points in the discretised space can be realised as ensembles of symbolically labelled deterministic trajectories on an `all-at-once' fractal attractor. It is shown how quantum mechanics might be `gloriously explained and derived' as the singular continuum limit of the discretisation of Hilbert space;  It is argued that the real message behind Bell's Theorem has less to do with locality, realism or freedom to choose, and more to do with the need to develop more explicitly holistic theories when attempting to synthesise quantum and gravitational physics. 
\end{abstract}

\section{Introduction}
\label{intro}

A deterministic hidden-variable model is said to be superdeterministic - not a word the author would have chosen - if the so-called Measurement Independence assumption (sometimes referred to as the Statistical Independence assumption or the $\lambda$-independence assumption),
\be
\label{statind}
\rho(\lambda | xy)=\rho(\lambda)
\ee
is violated \cite{Hall:2010} \cite{Scarani:2019} \cite{Chen:2020} \cite{HossenfelderPalmer} \cite{tHooft:2015b}. Here $\rho$ is a probability density on a set of hidden variables $\lambda$, and $x$ and $y$ $\in \{0,1\}$ denote experimentally chosen measurement settings - for concreteness, nominally accurate polariser orientations. Without (\ref{statind}), it is impossible to show that model satisfies the CHSH version of Bell's inequality
\be
\label{be}
|C(x=0, y=0)-C(x=0, y=1)+C(x=1, y=0)+C(x=1, y=1)| \le 2
\ee
where $C$ denotes a correlation on Bell-experiment measurement outcomes over an ensemble of particle pairs prepared in the singlet state. 

The argument that models which violate (\ref{statind}) are conspiratorial, originates in a paper by Shimony, Horne and Clauser, written in response to Bell's paper on local beables \cite{Belletal:1985}. Shimony \emph{et al} write:
\begin{quote}
In any scientific experiment in which two or more variables are supposed to be randomly selected, one can always conjecture that some factor in the overlap of the backwards light cones has controlled the presumably random choices. But, we maintain, skepticism of this sort will essentially dismiss all results of scientific experimentation. Unless we proceed under the assumption that hidden conspiracies of this sort do not occur, we have abandoned in advance the whole enterprise of discovering the laws of nature by experimentation.
\end{quote}
The drug trial is often used to illustrate the contrived nature of such a conspiracy. For example \cite{Goldsteinetal:2011}:
\begin{quote}
$\dots$ if you are performing a drug versus placebo clinical trial, then you have to select some group of patients to get the drug and some group of patients to get the placebo. The conclusions drawn from the study will necessarily depend on the assumption that the method of selection is independent of whatever characteristics those patients might have that might influence how they react to the drug.
\end{quote}
Related to this, superdeterminism is sometimes described as requiring exquisitely (and hence unrealistically) finely tuned initial conditions \cite{Baas:2020}, or as negating experimenter freedom \cite{Zeilinger}. A number of quantum foundations experts (e.g. \cite{Maudlin:2023} \cite{WisemanCavalcanti} \cite{Araujo:2019} \cite{Aaronson}) use one or more of these arguments to dismiss superdeterminism in derisive terms.

A new twist was added by Aaronson \cite{Aaronson} who concluded his excoriating critique of superdeterminism with a challenge:
\begin{quote}
Maxwell's equations were a clue to special relativity. The Hamiltonian and Lagrangian formulations of classical mechanics were clues to quantum mechanics. When has a great theory in physics ever been grudgingly accommodated by its successor theory in a horrifyingly ad-hoc way, rather than gloriously explained and derived?
\end{quote} 

It would seem that developing superdetermistic models of quantum physics is a hopeless cause. However, the purpose of this paper is to attempt to show that superdeterminism has been badly misunderstood, to rebuff these criticisms and indeed suggest that violating (\ref{statind}) is perhaps the \emph{only} sensible way to understand the experimental violation of Bell inequalities. Importantly, we show that whilst conspiracy would imply a violation of (\ref{statind}), the converse is not true. In Section \ref{consp}, we define what we mean by a non-conspiratorial violation of (\ref{statind}) and how it differs from these more traditional conspiratorial violations. Motivated by this, and the unshieldable effects of gravity as described in Section \ref{andromeda}, a non-conspiratorial superdeterministic model is described in Section \ref{superdeterministic}, based on a specific discretisation of complex Hilbert Space \cite{Palmer:2020}. Using the homeomorphism between $p$-adic integers and fractal geometry, the model is linked to the invariant set postulate \cite{Palmer:2009a} - the universe is evolving precisely on some special dynamically invariant subset of state space. In Section \ref{belltheorem} it is shown how this model violates (\ref{statind}) non-conspiratorially, and indeed violates (\ref{be}) in exactly the same way as does quantum mechanics. In Section \ref{lc} we show that the superdeterministic model is locally causal. In Section \ref{other} we discuss common objections to superdeterminism including fine tuning, free will and the so-called drug-trial analogy. Addressing Aaronson's challenge in Section \ref{aaronson}, we show how the state space of quantum mechanics can be considered the singular continuum limit of the discretised Hilbert space of the superdeterministic model. A possible experimental test of the supderdeterministic model is discussed in Section \ref{experiment}. 

Before embarking on this venture, one may ask answer the question: why bother? After all, quantum mechanics is an extremely well tested theory, and has never been found wanting. Why not just accept that quantum theory violates the concept of local realism - whatever that means -  and get on with it? The author's principal motivation for pursuing theories of physics which violate (\ref{statind}) lies in the possibility of finding a theory of quantum physics that is not inconsistent with the locally causal nonlinear geometric determinism of general relativity theory. As discussed in Section \ref{conclusions}, results from this paper suggest that instead of seeking a quantum theory of gravity (`quantum gravity') we should be seeking a strongly holistic gravitational theory of the quantum, from which the Euclidean geometry of space-time is emergent as a coarse-grained approximation to the $p$-adic geometry of state space. This, the author believes, is the real message behind the experimental violation of Bell inequalities. 

\section{Conspiratorial and Non-conspiratorial Interventions}
\label{consp}

There is no doubt that conspiratorial violations of Bell inequalities, of the type mentioned in the Introduction, imply a violation of (\ref{statind}). Here we are concerned with the converse question: does a violation of (\ref{statind}) imply the existence of a conspiratorial hidden variable theory? In preparing to answer this question, we quote from Bell's response `Free Variables and Local Causality' \cite{Belletal:1985} (FVLC) to Shimony et al \cite{Belletal:1985}. In FVLC, Bell writes:

\begin{quote}
I would insist here on the distinction between analyzing various physical theories, on the one hand, and philosophising about the unique real world on the other hand. In this matter of causality it is a great inconvenience that the real world is given to us once only. We cannot know what would have happened if something had been different. We cannot repeat an experiment changing just one variable; the hands of the clock will have moved and the moons of Jupiter. Physical theories are more amenable in this respect. We can calculate the consequences of changing free elements in a theory, be they only initial conditions, and so can explore the causal structure of the theory. I insist that B [Bell's paper on the theory of local beables \cite{Belletal:1985}] is primarily an analysis of certain kinds of physical theory.
\end{quote}
To understand the significance of this quote, we base the analysis of this paper around the thought experiment devised by Bell in FVLC, where, by design, human free will plays no explicit role. Bell supposes $x$ and $y$ are determined by the outputs of two pseudo-random number generators (PRNGs). These outputs are sensitive to the parity $P_x$ and $P_y$ of the millionth digits of the PRNG inputs. Bell now makes what he calls a `reasonable' assumption:
\begin{quote}
But this peculiar piece of information [whether the parity of the millionth digit is odd or even] is unlikely to be the vital piece for any distinctively different purpose, i.e., it is otherwise rather useless $\ldots$ In this sense the output of such a [PRNG] device is indeed a sufficiently free variable for the purpose at hand.
\end{quote}
It is important to note that, in this quote, Bell deflects discussion away from statistical properties of some ensemble of runs of an experiment where measurement settings are supposedly selected randomly (as per the Shimony et al quote above), and focusses on one individual run of an experiment. There is an important reason for this. When discussing conspiratorial hidden-variable models of the Shimony et al type, it is assumed that in any large enough ensemble with common value of $\lambda$, there exist sub-ensembles for each of the four pairs of measurement settings ($00$, $01$, $10$ and $11$). In this context, (\ref{statind}) implies that the four sub-ensembles are statistically equal. Conversely, in a conspiratorial violation of (\ref{statind}), the four sub-ensembles are statistically unequal. In such a situation, $\rho(\lambda | xy)$ can be interpreted as a frequency of occurrence within each of the four sub-ensembles. It is worth noting that in such a frequency-based interpretation of $\rho$, the issue of counterfactual definiteness - a central issue below - never arises. This has led to a misconception that counterfactual definiteness plays no role in Bell's Theorem.  

Importantly, hidden-variable models do not have to be like this. It is possible that the value of $\lambda$ is unique to each run of a Bell experiment. The model described below has this property. In this situation, if $\rho(\lambda| xy)$ were to define a frequency of occurrence, and a particle with value $\lambda$ was measured with settings $x$ and $y$ and could only be measured once, then $\rho(\lambda |xy) \ne 0$ and $\rho(\lambda |x'y)=\rho(\lambda |xy') =\rho(\lambda |x'y') =0$ where $x'=1-x$ and $y'=1-y$. However, this does not itself imply a violation of (\ref{statind}) - it merely emphasises that $\rho(\lambda |xy)$ is fundamentally \emph{not} a frequency of occurrence in an ensemble but rather is a \emph{probability density} defined on an individual particle with value $\lambda$. 

With that in mind, let us continue to focus, as Bell does, on a single entangled particle pair.  If $x$ and $y$ were not free variables, then $P_x$ or $P_y$ would not be vital for 'distinctively different' purposes. That is to say, we could vary $P_x$ or $P_y$ without having a vital impact on distinctly different systems. In the language of Pearl's causal inference theory \cite{Pearl}, if $x$ and $y$ were not free variables, there would exist small so-called interventions which by design changed $P_x$ or $P_y$, and by consequence had a vital impact on distinctly different systems. 

The most important part of this paper is to draw attention to two possible ways this might happen. The first is the conventional way where the effect of the intervention propagates causally from its localised source in space-time, somehow vitally influencing distinctly different systems. It is hard to imagine how varying something as insignificant as the parity of the millionth digit of an input to a PRNG could have such a vital impact. For this reason, Bell argued, the PRNG output should indeed be considered a free variable. This, of course, is not unreasonable.  

However, there is a second possibility - one that was not considered by Bell - that such interventions are simply inconsistent with physical theory. That is to say, the hypothetical state of the universe where $P_x$ or $P_y$ is perturbed but all other distinctly different elements of the universe are kept fixed, is inconsistent with the laws of physics. If such an intervention was hypothetically applied to a localised region of the universe, the ontological status of the whole universe would change; clearly a state of the universe as a whole only exists if all parts of it satisfy the laws of physics. Of course, we cannot perform an actual experiment to test directly this potential inconsistency: in changing the millionth digit, the hands of the clock and the moons of Jupiter will have moved, as Bell notes. Hence addressing the question of whether the PRNG output is a free variable in the sense of this paragraph requires studying the mathematical properties of physical theory. This is Bell's point in the first quote and it is the topic of this paper. 

Below we develop a model where each particle pair has unique $\lambda$ with the property
\be
\label{statind2}
\rho(\lambda | xy) \ne 0 \implies  \rho(\lambda | x'y)=0, \ \rho(\lambda, xy')=0; \ \ \ x'=1-x \ \ y'=1-y
\ee
As will be shown this can be interpreted as a locally causal non-conspiratorial violation of (\ref{statind}). It implies that an intervention on $x$, keeping $\lambda$ and $y$ fixed, leads to a state of the world which is inconsistent with the model postulates and therefore has zero probability. Clearly, this cannot be an intervention within space-time. Instead the intervention describes a hypothetical perturbation which takes a point in the state space of the universe (labelled by the triple $(\lambda, x, y)$), consistent with the model and hence with $\rho(\lambda |xy) \ne 0$, to a state $(\lambda, x',  y)$ which is inconsistent with physical theory and hence has $\rho(\lambda | x'y) = 0$. Importantly, this means a non-conspiratorial interpretation of (\ref{statind2}) implies that physical theory does not have the post-hoc property of counterfactual definiteness. The essential nature of counterfactuals in Bell's Theorem was pointed out by Redhead \cite{Redhead} some years ago. This important point seems to have been lost in more recent discussions of Bell's Theorem. 

However, one needs to be careful to not throw the baby out with the bathwater. Counterfactual reasoning is both pervasive and important in physics. Indeed, it is central to the scientific method \cite{Maudlin}. The fact that we can express laws of physics mathematically gives us the power to estimate what would have happened had something been different. Such estimates can lead to predictions and the predictions can be verified by experiment. We clearly do not want to give up counterfactual reasoning entirely in our search for new theories of physics. We address this concern by noting that output from experiment and physical theory (particularly when the complexities of the real world are accounted for) involves some inherent coarse-graining. Experiments have some nominal accuracy and output from a computational model is typically truncated to a fixed number of significant digits. We can represent such coarse graining by integrating exact output of physical theory over small volumes $V_\epsilon>0$ in state space, where $V_\epsilon \rightarrow 0$ smoothly as $\epsilon \rightarrow 0$. In developing a superdeterministic model below, we will require that counterfactual definiteness holds generically when variables are coarse grained over volumes $V_\epsilon$, for suitably defined small values $\epsilon >0$. The model based on discretised Hilbert Space, described in Section \ref{superdeterministic} has this property. As discussed below, this renders the drug-trial analogy irrelevant. 

These matters are subtle, and it seems Bell appreciated this. Instead of derisively rejecting theories where (\ref{statind}) is violated he concludes FVLC with the words:
\begin{quote}
Of course it might be that these reasonable ideas about physical randomisers are just wrong - for the purposes at hand.
\end{quote}
Indeed, in his last paper `La Nouvelle Cuisine', Bell writes:
\begin{quote}
An essential element in the reasoning here is that [polariser settings] are free variables.... Perhaps such a theory could be both locally causal and in agreement with quantum mechanical predictions. However, I do not expect to see a serious theory of this kind. I would expect a serious theory to permit `deterministic chaos' or `pseudorandomness' \ldots But I do not have a theorem about that.
\end{quote}
The last sentence is insightful, because, as we discuss, there is no such theorem. Indeed the reverse: here we develop a serious non-classical model where polariser settings are not free variables, utilising geometric concepts in deterministic chaos. 

\section{The Andromedan Butterfly Effect}
\label{andromeda}

The purpose of this section is to note how the Principle of Equivalence makes the interaction of matter with gravity especially chaotic. Here we repeat a calculation first reported by Michael Berry \cite{Berry:1978} and further analysed in \cite{Schwartz:2019}. It is well known that the flap of a butterfly's wings in Brazil can cause a tornado in Texas. But, could the flap of a butterfly's wings on a planet in the Andromedan galaxy cause a tornado in Texas?

We consider molecules in the atmosphere as hard spheres of radius $R$ with mean free distance $l$ between collisions. We wish to estimate the uncertainty $\Delta \theta_M$ in the angle of the $M$th collision of one of the spheres with other spheres, due to some very small uncertain external force. It is easily shown that $\Delta \theta_M$ grows exponentially with $M$. In particular
\be
\Delta \theta_M \approx (\frac{l}{R})^M \Delta \theta_1
\ee

With $l \approx 10^{-7}m$ and $R \approx 10^{-10}$ m, $l/R\approx 10^3$. After how many collisions $M$ is the position of a molecule in Earth's atmosphere sensitive to the gravitational uncertainty in the uncertain flap of a butterfly's wing in the Andromeda galaxy? Let $r$ denote the distance between Earth and Andromeda. The flap of a butterfly's wing through a distance $\Delta r$ will change the gravitational force on our target molecule by an amount $\Delta F = \partial F/ \partial r \ \Delta r =(Gm_{mol} m_{wing}/r^3) \ \Delta r$. Uncertainty in the flap of an Andromedan butterfly's wing will therefore induce an uncertainty in the acceleration of a terrestrial atmospheric molecule by an amount 
\be
\Delta a= \frac{2Gm_{wing}}{r^3} \Delta r
\ee
If $\tau$ denotes the mean time between molecular collisions, there is an uncertainty in the direction of the molecule $\Delta \theta_1 \approx \tau^2 \Delta a / R$. Plugging in $m_{wing}=10^{-5}$ kg, $\Delta r= 10^{-2}$m, $\tau = 10^{-9}$s and $r=10^{22}$m, gives $\Delta \theta_1= 10^{-90}$. How large is $M$ before $\Delta \theta_M \approx 1$? From the above, $10^{3M} \approx 10^{90}$. Hence after about 30 collisions the direction the direction of travel of the terrestrial molecule has been rendered completely uncertain by the gravitational effect of the Andromedan butterfly. Indeed one can go further: the direction of the terrestrial molecule is rendered completely uncertain by the uncertain position of a single electron at the edge of the visible universe after about only 50 or so collisions (which below we will round up to an order of magnitude of $10^2$). Once the molecules of the Earth's atmosphere have been disturbed in this way, it is only a matter of a couple of weeks before the nonlinearity of the Navier-Stokes equations leads to uncertainty in a large-scale weather pattern, such as a Texan tornado \cite{Lorenz:1969} \cite{Palmer:2022b}.

Nowhere above did the mass of the molecule enter the calculation. The same calculation could just as well apply to the measurement process in quantum physics - where a particle interacts with atoms in some measuring device. Indeed the same calculation could apply to molecules in an experimenter's brain, affecting the decisions they make. 

As a matter of principle, the direction of motion of a molecule after 100 collisions is for all practical purposes uncomputable. If a computation (say on a supercomputer) was attempted, then the direction of the molecule would depend on the motion of electrons in the chips of the supercomputer. Self-referentially, the computational software would have to include a representation of the computation. Technically, the Andromedan Butterfly Effect describes a computationally irreducible system \cite{Wolfram} - one that cannot be computed with a computationally simpler system. 

This is the first step in our argument that (\ref{statind}) could be violated because the universe should be considered a rather holistic chaotic system.  Gravity is what makes the universe holistic. Unlike the other forces of nature, the effects of gravity cannot be shielded by negative charges. However, by itself, this argument doesn't imply that the parity of the millionth digit, or the flap of a butterfly's wings in Andromeda is a \emph{vital} piece of information for determining distinctly different systems. For that we need more from the theory of chaos. 

\section{A Superdeterministic Model}
\label{superdeterministic}

\subsection{Nominal vs Exact Measurement Settings and Hidden Variables}

We build on Bell's thought experiment whereby polariser orientations are determined by the parities $P_x$ and $P_y$ in Alice and Bob's PRNGs (supposing there is no particular reason why these would be odd or even). Manifestly, $P_x$ and $P_y$ only determine the polariser orientations to some nominal accuracy. That is, $x, y \in \{0,1\}$ determine small neighbourhoods of the 2-sphere of orientations, referred to as $\epsilon$ disks. None of the results below depend on the magnitude of $\epsilon$ as long as $\epsilon > 0$ (as discussed in Section \ref{aaronson}, the limit $\epsilon=0$ in the proposed model is singular). It will be assumed - consistent with our search for an underlying deterministic theory - that when measurements are made on particular particles, the measurement outcomes $\pm 1$ are associated with some exact, albeit unknown, polariser orientations $\bold X$ and $\bold Y$. Here, $\bold X$ and $\bold Y$ denote unit vectors in physical 3-space from the centre of a unit ball, to the surface of the 2-sphere of orientations. The corresponding points on the 2-sphere to which the vectors point are written as $X$ and $Y$. The nominal directions $\bold x$ and $\bold y$ refer to unit vectors pointing to the centroids of the $\epsilon$ disks. In the discussion below, we assume that the probabilistic nature of the quantum measurement problem arises because the measurement outcome is typically sensitive to the exact measurement settings within an $\epsilon$ disk (c.f. fractally intertwined basins of attraction \cite{Palmer:1995}). Hence, any given nominal setting is consistent with an ensemble of possible exact settings. 

With this in mind, we let $\lambda$ describe all of the variables which, over and above $P_x$ and $P_y$, determine the exact measurement settings $\bold X$ and $\bold Y$. These variables include Andromedan butterflies and electrons at the end of the visible universe. When measuring a single qubit we can frame it like this: consider a spacelike hypersurface $\mathcal S$ in the past of some event $\mathscr E$ where the measurement outcome was known, and before the event where the PRNG output was known, then $\lambda$ must include data on $\mathcal S$, on and inside $\mathscr E$'s past light cone, with the exception of $P_x$ (or its determinants). In this way we can write $\bold X=\bold E(\lambda, \bold x)$, $\bold Y= \bold E(\lambda, \bold y)$. The extension of this causal picture for entangled qubits is discussed in Section \ref{lc}. 

\subsection{Rational Quantum Mechanics: RaQM}
\label{raqm}

Motivated by John Wheeler's plea to excise the continuum from physical theory \cite{Wheeler} - see also \cite{Ellis:2018} - a way to introduce non-conspiratorial superdeterminism into quantum physics is to discretise Hilbert space \cite{Buniy:2005} \cite{Buniy:2006} \cite{Palmer:2020} \cite{Carroll:2023}. At the experimental level this is surely unexceptionable: all experiments which confirm quantum mechanics will necessarily confirm a model of quantum physics based on discretised Hilbert Space, providing the discretisation is fine enough. However, as discussed, such a discretisation has profound implications for the interpretation of quantum experiments

\subsubsection{Single Qubits}
\label{sq}

Consider a qubit prepared in a state  $|1\rangle$, and written as 
\be
\label{qubit2}
|1\rangle = \cos\frac{\theta}{2} |1'\rangle + \sin\frac{\theta}{2} e^{i \phi} | -1' \rangle
\ee
with respect to some arbitrary basis $\{|1'\rangle, |-1'\rangle\}$. 
We call this a `proper basis' if  
\be
\label{semiexact}
\cos^2 \frac{\theta}{2} =\frac{m_1}{p};\ \ \ \ \frac{\phi}{2\pi} =\frac{n_1}{p}
\ee
where $p$ is some large prime number and $0\le m_1, n_1 \le p$. In a proper basis, it can be shown \cite{Palmer:2022} that $|\psi(\theta, \phi)\rangle$ has a representation as a bit string of $p$ deterministic elements $\pm 1$ where $\cos^2 \theta/2$ equals the fraction of elements $+1$ in the bit string and $\phi$ denotes a cyclical permutation of elements of the bit string. Here we will assume $p$ is a large prime. In RaQM the measurement basis is a proper basis, and the measurement output is a deterministically selected element of the bit string. In this way, each proper basis is a basis with respect to which a measurement could potentially be made by the experimenter. A measurement cannot be made with respect to a basis which is not proper. For each proper basis, the potential measurement outcome is associated with some exact measurement orientation $(\theta_e, \phi_e)$ in physical 3-space, which satisfies the rationality constraints
\be
\label{exact}
\cos^2 \frac{\theta_e}{2} =0.m_1m_2 \ldots m_I \in \mathbb Q \implies \cos \theta_e \in \mathbb Q;\ \ \  \ \ \frac{\phi_e}{2\pi} =0.n_1n_2\ldots n_J \in \mathbb Q
\ee
written in base-$p$. In this way, $(\theta, \phi)$ correspond to nominally accurate measurement orientations. In this representation, measurement outcome probabilities (over ensembles of elements of the bit string) automatically satisfy Born's Rule. Hence Born's Rule is not a separate axiom in RaQM. 

If the exact preparation setting is represented by $\bold X$, and exact measurement setting $\bold Y$, then the first of (\ref{exact}) can be written $\bold X \cdot \bold Y \in \mathbb Q$, where $\cdot$ denotes the scalar product of two vectors. 
Because of Niven's theorem
\begin{theorem*}
\label{niven}
 If $\phi_e/2\pi \in \mathbb Q$, then $\cos \phi_e \notin \mathbb{Q}$ except when $\cos \phi_e=0, \pm \frac{1}{2}, \pm 1$ \cite{Niven} \cite{Jahnel:2005}
\end{theorem*}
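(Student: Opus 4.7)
The plan is to reduce the statement to classical facts about algebraic integers. First I would translate the hypothesis $\phi_e/2\pi \in \mathbb{Q}$ into the assertion that $\zeta := e^{i\phi_e}$ is a root of unity: writing $\phi_e = 2\pi r/s$ with $r,s \in \mathbb{Z}$, we see that $\zeta$ satisfies $z^s - 1 = 0$, a monic polynomial in $\mathbb{Z}[z]$. Hence $\zeta$ is an algebraic integer, and since the algebraic integers form a ring, so is $\zeta + \zeta^{-1} = 2\cos\phi_e$.

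A slightly more self-contained route, which I would probably prefer for exposition, is to build the polynomial by hand via a Chebyshev-style recursion. An easy induction using $2\cos((n+1)\theta) = (2\cos\theta)(2\cos n\theta) - 2\cos((n-1)\theta)$ produces, for every $n \ge 0$, a monic polynomial $P_n \in \mathbb{Z}[x]$ satisfying $P_n(2\cos\theta) = 2\cos(n\theta)$. Setting $n = s$ and $\theta = \phi_e$ gives $P_s(2\cos\phi_e) = 2\cos(2\pi r) = 2$, so $c := 2\cos\phi_e$ is a root of the monic integer polynomial $P_s(x) - 2$.

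The next step is to invoke the lemma that a rational number which is also an algebraic integer must be a rational integer. Assuming $\cos\phi_e = a/b \in \mathbb{Q}$ in lowest terms, the rational-root theorem applied to $P_s(x) - 2$ with $x = c = 2a/b$ forces the denominator of $c$ (in lowest terms) to divide the leading coefficient $1$, so $c \in \mathbb{Z}$. Combined with the trivial bound $\lvert 2\cos\phi_e \rvert \le 2$, this leaves $c \in \{-2,-1,0,1,2\}$, which gives exactly the five admissible values $\cos\phi_e \in \{0, \pm\tfrac{1}{2}, \pm 1\}$ listed in the statement.

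The main obstacle, should one want a truly self-contained proof, is the ``rational algebraic integer is a rational integer'' step, since everything else is either routine induction or a bound. The cleanest way I know to discharge it without invoking Gauss's lemma is simply to apply the rational-root theorem directly to the monic polynomial $P_s(x) - 2$, which is elementary. All other ingredients — the recursion for $P_n$, the bound $\lvert \cos\phi_e\rvert \le 1$, and the final enumeration — are bookkeeping.
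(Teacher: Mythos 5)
Your proof is correct and complete. Note that the paper itself offers no proof of this statement at all --- it is quoted as a known result with citations to Niven and to Jahnel --- so there is nothing internal to compare against; your argument (the Chebyshev-type recursion $P_{n+1}(x)=xP_n(x)-P_{n-1}(x)$ showing $2\cos\phi_e$ is an algebraic integer, the rational-root theorem forcing $2\cos\phi_e\in\mathbb{Z}$, and the bound $|2\cos\phi_e|\le 2$ yielding the five exceptional values) is precisely the standard proof given in those references. The only point worth polishing is the passage from $\cos\phi_e=a/b$ in lowest terms to $c=2a/b$: this fraction need not be in lowest terms, but since the reduced denominator of $c$ divides $b$ and must divide the leading coefficient $1$, the conclusion $c\in\mathbb{Z}$ stands as you state it.
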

\noindent angles that determine probabilities and angles that determine phases, c.f., (\ref{exact}), are relatively incommensurate, except at the \emph{precise} values $\phi_e =0, \pi/2, \pi, 3\pi/2$. One can assume that such precise values never occur in practice (any gravitational wave would disturb a system away from such a precise value), though they may be relevant for theoretical reasons (e.g. when one considers a measurement performed with a \emph{precisely} opposite measurement direction). 

An important corollary to Niven's Theorem - central to this paper - is what we refer to as the Impossible Triangle Corollary:
\begin{corollary}
Let $\triangle{XYZ}$ be a triangle on the unit sphere with rational internal angles, not precisely equal to multiples of $45^\circ$, such that $\bold X \cdot \bold Y$ and $\bold Y \cdot \bold Z \in \mathbb Q$. Then $\bold X \cdot \bold Z \notin \mathbb Q$.  
\end{corollary}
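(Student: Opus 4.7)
The plan is to apply the spherical law of cosines at vertex $Y$, in the form
$$\mathbf{X}\cdot\mathbf{Z} \;=\; (\mathbf{X}\cdot\mathbf{Y})(\mathbf{Y}\cdot\mathbf{Z}) \;+\; \sqrt{1-(\mathbf{X}\cdot\mathbf{Y})^{2}}\,\sqrt{1-(\mathbf{Y}\cdot\mathbf{Z})^{2}}\,\cos\beta,$$
where $\beta$ is the internal angle at $Y$. I would first establish this identity cleanly by noting that $\cos\beta$ equals the angle between the normals $\mathbf{Y}\times\mathbf{X}$ and $\mathbf{Y}\times\mathbf{Z}$ to the two great circles meeting at $Y$, then expanding $(\mathbf{Y}\times\mathbf{X})\cdot(\mathbf{Y}\times\mathbf{Z})$ via the Lagrange identity $(\mathbf{a}\times\mathbf{b})\cdot(\mathbf{c}\times\mathbf{d})=(\mathbf{a}\cdot\mathbf{c})(\mathbf{b}\cdot\mathbf{d})-(\mathbf{a}\cdot\mathbf{d})(\mathbf{b}\cdot\mathbf{c})$ together with $|\mathbf{Y}\times\mathbf{X}|^{2}=1-(\mathbf{X}\cdot\mathbf{Y})^{2}$.

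The main step is a proof by contradiction. Assume $\mathbf{X}\cdot\mathbf{Z}\in\mathbb{Q}$; isolating the square-root factor and squaring yields
$$\bigl(\mathbf{X}\cdot\mathbf{Z}-(\mathbf{X}\cdot\mathbf{Y})(\mathbf{Y}\cdot\mathbf{Z})\bigr)^{2} \;=\; \bigl(1-(\mathbf{X}\cdot\mathbf{Y})^{2}\bigr)\bigl(1-(\mathbf{Y}\cdot\mathbf{Z})^{2}\bigr)\cos^{2}\beta.$$
Every factor on the right except $\cos^{2}\beta$ is manifestly rational, and the coefficient is nonzero for a non-degenerate triangle, so $\cos^{2}\beta\in\mathbb{Q}$; by the double-angle identity, $\cos 2\beta = 2\cos^{2}\beta - 1\in\mathbb{Q}$ as well. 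Since $\beta$ is a rational multiple of $2\pi$, so is $2\beta$, and Niven's theorem (quoted immediately above) forces $\cos 2\beta\in\{0,\pm\tfrac12,\pm 1\}$; back-substitution then gives $\cos^{2}\beta\in\{0,\tfrac14,\tfrac12,\tfrac34,1\}$. The values $\{0,\tfrac12,1\}$ are precisely those for which $\beta$ is a multiple of $45^{\circ}$, and these are ruled out by hypothesis.

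The residual values $\cos^{2}\beta\in\{\tfrac14,\tfrac34\}$ (where $\cos\beta\in\{\pm\tfrac12,\pm\tfrac{\sqrt{3}}{2}\}$ and $\beta$ is a multiple of $30^{\circ}$ but not of $45^{\circ}$) are the main obstacle. To close the proof one has to go further, either by reading the stated exclusion as shorthand for the full Niven-exceptional set of angles, or, more seriously, by running the spherical law of cosines in parallel at the other two vertices and using the rationality of all three internal angles $\alpha,\beta,\gamma$ together with Niven to rule out any consistent rational assignment of the three dot products $\mathbf{X}\cdot\mathbf{Y}$, $\mathbf{Y}\cdot\mathbf{Z}$, $\mathbf{X}\cdot\mathbf{Z}$ simultaneously. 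I expect this residual case analysis, rather than the trigonometric identity or the Niven reduction itself, to be where the genuine work lies.
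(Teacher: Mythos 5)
Your argument is the same as the paper's: apply the spherical law of cosines at the vertex $Y$, assume $\mathbf{X}\cdot\mathbf{Z}\in\mathbb{Q}$, isolate and square the square-root term to conclude $\cos^{2}\beta\in\mathbb{Q}$ and hence $\cos 2\beta\in\mathbb{Q}$, then invoke Niven's theorem. (The paper simply quotes the cosine rule rather than deriving it via the Lagrange identity, and does not remark on the non-degeneracy needed to divide out $(1-(\mathbf{X}\cdot\mathbf{Y})^{2})(1-(\mathbf{Y}\cdot\mathbf{Z})^{2})$; these are cosmetic differences.)

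The residual case you flag is a genuine gap, and it is present, unacknowledged, in the paper's own proof. The paper asserts that $\cos 2\phi_{Y}\in\mathbb{Q}$ is ``impossible since $\phi_{Y}$ is rational and not a multiple of $45^{\circ}$,'' but Niven's theorem only forces $\cos 2\phi_{Y}\in\{0,\pm\tfrac12,\pm1\}$, and the values $\pm\tfrac12$ are realised by $\phi_{Y}\in\{30^{\circ},60^{\circ},120^{\circ},150^{\circ}\}$, none of which is a multiple of $45^{\circ}$. These angles are not excluded by the hypothesis, so the contradiction does not materialise for them; indeed, taking $\phi_{Y}=60^{\circ}$ and $\cos\theta_{XY}=\cos\theta_{YZ}=\tfrac{3}{5}$ gives $\cos\theta_{XZ}=\tfrac{9}{25}+\tfrac{16}{25}\cdot\tfrac12=\tfrac{17}{25}\in\mathbb{Q}$, so the corollary as stated can only be rescued either by exploiting the (unused) rationality of the other two internal angles or by strengthening the hypothesis to exclude the full Niven-exceptional set, i.e.\ multiples of $30^{\circ}$ as well as of $45^{\circ}$. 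With that strengthened hypothesis your argument closes immediately and coincides with the paper's; without it, the extra case analysis you anticipate at the other two vertices would genuinely be required, and neither your proposal nor the paper carries it out.
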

\begin{proof}
Assume $\bold X \cdot \bold Z= \cos \theta_{XZ}$ is rational, where $\theta_{XZ}$ denotes the exact angular distance between $X$ and $Z$ on the unit sphere, etc. By the cosine rule for $\triangle {XYZ}$,
\be
\label{cosinerule}
\cos \theta_{XZ}= \cos \theta_{XY} \cos \theta_{YZ} + \sin \theta_{XY} \sin \theta_{YZ} \cos\phi_{Y}
\ee
where $\phi_{Y}$ is the exact internal angle of the triangle at the vertex $Y$. Since $\cos\theta_{XY}$ and $\cos \theta_{YZ}$ are both rational, then from (\ref{cosinerule}), $\sin \theta_{XY} \sin \theta_{YZ} \cos \phi_{Y}$ must be rational. Squaring, $(1-\cos^2 \theta_{XY})(1- \cos^2 \theta_{YZ}) \cos^2 \phi_{Y}$ must be rational. Again, since $\cos \theta_{XY}$ and $\cos \theta_{YZ}$ are both rational, $\cos^2 \phi_{Y}$ and hence $\cos 2 \phi_{Y}$ must be rational. But this is impossible since $\phi_Y$ is itself rational and $\phi_Y$ is not a multiple of $45^\circ$. Hence $\cos \theta_{XZ}$ must be irrational. 
\end{proof}

The Impossible Triangle Corollary is vital for explaining the notion of non-commutativity in RaQM \cite{Palmer:2022}. Consider a particle with spin prepared (with Stern-Gerlach device SG0) relative to some exact orientation $\bold X$. It is passed through Stern-Gerlach device SG1 with exact orientation $\bold Y$. The spin-up output beam of SG1 is passed through Stern-Gerlach device SG2 with exact orientation $\bold Z$. By RaQM, $\bold X \cdot \bold Y$ and $\bold Y \cdot \bold Z$ must be rational. In a run of the experiment, a detector in one output channel of SG2 will register a particle. Consider a hypothetical counterfactual experiment on the same particle (same $\lambda$) where SG1 and SG2 are swapped (commuted). The measurement outcome from this hypothetical experiment is undefined: by the Impossible Triangle Corollary, if  $\bold X \cdot \bold Y$ and $\bold Y \cdot \bold Z$ are rational, then $\bold X \cdot \bold Z$ is not.

We note in passing that it is straightforwardly shown \cite{Palmer:2020} that the ensemble representation of the single qubit state in RaQM satisfies a uncertainty principle relationship, i.e.
\be
\Delta S_x \Delta S_y  \ge \frac{\hbar}{2} \langle S_z \rangle
\ee
Here $\Delta S_x$ and $\Delta S_y$ are associated with standard deviations of bit-strings, and $\langle S_z \rangle$ denotes a bit-string ensemble mean. 

\subsubsection{Multiple Qubits}

Bell's inequality is based on measurements of pairs of entangled particles prepared in the quantum mechanical singlet state
\be
\label{singlet}
|\psi>=\frac{1}{\sqrt 2} [ |\bold x +\rangle |\bold y -\rangle-|\bold x -\rangle|\bold y +\rangle]
\ee
with correlations
\be
C(\bold x, \bold y)= \langle \psi | (\boldsymbol \sigma \cdot \bold x)\;(\boldsymbol \sigma \cdot \bold y) | \psi \rangle = - \bold x \cdot \bold y
\ee
where $\boldsymbol \sigma$ denote Pauli matrices. 

In RaQM, an $n$-qubit system is represented by a set of $n$ bit strings. Hence in RaQM (\ref{singlet}) is represented by two correlated bit strings (each with equal numbers of $+1$s and $-1$s). The bits represent measurement outputs from members of an ensemble, defined by deterministic laws. Corresponding to any one ensemble member (which can be labelled by $\lambda$) the exact measurement settings corresponding to $\bold x$, $\bold y$ must satisfy 
\be
\label{ang1}
\bold X \cdot \bold Y \in \mathbb Q
\ee
i.e., $\cos \theta_{XY}$ is rational. Keeping $X$ fixed and perturbing $Y \mapsto Y'$, then $X$ and $Y'$ are also permissible exact settings if $\bold X \cdot \bold Y'$ is rational, and the angle $\phi$ subtended at $X$ between the two great circles ${XY}$ and ${XY'}$ satisfies
\be
\label{ang2}
\frac{\phi}{2\pi} \in \mathbb Q
\ee
Similar rational constraints apply, fixing $Y$ and perturbing $X \mapsto X'$.

In any small neighbourhood of some point which does not satisfy the rationality conditions (\ref{ang1}) and (\ref{ang2}), there will, for large enough $p$, exist points which do satisfy the rational conditions. Hence it will be impossible to violate the rationality conditions in any coarse-graining no matter how fine, for large enough $p$. 

\subsection{The Invariant Set Postulate}

A clear disadvantage of discretised Hilbert space is that the sum of two discretised Hilbert vectors is no longer guaranteed to be a Hilbert vector - indeed typically it is not by Niven's theorem. However, if discretised Hilbert vectors represent symbolic strings describing ensembles of deterministic worlds, one can speculate that arithmetic closure exists at some deeper deterministic level. But what type of deterministic system would be consistent with the rationality constraints described above? The discussion in Section \ref{andromeda} suggests a chaotic model may be relevant. However, we need something in addition to mere chaos to account for the rationality constraints, we need a chaotic system evolving on an invariant subset of state space. 

Consider, for example, the chaotic model 
\begin{align}
\frac{dx_L}{dt}&= \sigma (y_L-x_L)\nonumber \\
\frac{dy_L}{dt}&= x_L(\rho-z_L) -y_L \nonumber \\
\frac{dz_L}{dt}&= x_Ly_L- \beta z_L
\end{align}
that Lorenz \cite{Lorenz:1963} discovered in his quest to understand the deterministic non-periodicity of weather. These equations describe a classical dynamical system and can be integrated from any triple of initial states $(x_L, y_L, z_L)$ at $t=0$. However, at $t=\infty$ this model has an emergent non-classical property: all states lie on a measure-zero, fractionally dimensioned, dynamically invariant subset $\mathscr I_L$ of state space, known as the Lorenz attractor. That $\mathscr I_L$ is fractionally dimensioned implies that $\mathscr I_L$ has a non-Euclidean, fractal geometry. That $\mathscr I_L$ is an invariant set implies that if a point lies on $\mathscr I_L$, its future evolution will continue to lie on $\mathscr I_L$ for all time, and its past evolution has lain on $\mathscr I_L$ for all time. That $\mathscr I_L$ has measure zero implies that the probability that a randomly chosen point belongs to $\mathscr I_L$ is equal to zero (random with respect to the uniform measure on the Euclidean state space $\mathbb R^3$ spanned by $(x_L,y_L, z_L)$). This is consistent with the notion that the rationals describe a set of measure zero in the continuum field of real numbers. Conversely, associated with $\mathscr I_L$ is a fractal invariant measure $\rho_L$ (a Haar-type measure). Points which do not lie on $\mathscr I_L$ have $\rho_L=0$. Such points define states which are inconsistent with a non-classical dynamical system where all states lie on $\mathscr I_L$ by definition. Such a system is non-classical because $\mathscr I_L$ is a non-computable subset of state space \cite{Blum} \cite{Dube:1993} - non-computability being a post-quantum discovery of 20th Century mathematics . 

Now suppose we are given some timeseries from the output of the Lorenz equations in this non-classical limit where the system is evolving on $\mathscr I_L$.  No matter how long is the timeseries, we cannot estimate statistical quantities such as correlations or conditional frequencies more accurately than the accuracy to which the timeseries has been outputted (e.g. the number of significant figures of output variables). That is to say, we must treat all estimates of frequency (and correlation) as functions of coarse-grained variables 
\be
\bar x_L = \int_{V_\epsilon} x_L \rho_L(x_L) dx_L; \ \ \bar y_L = \int_{V_\epsilon} y_L \rho_L(y_L) dy_L; \ \ \bar z_L = \int_{V_\epsilon} z_L \rho_L(z_L)dz_L
\ee
defined from non-zero $\epsilon$ balls of volume $V_\epsilon$  in state space. A key property of $\mathscr I_L$ is that no matter how small is $V_\epsilon$, as long as it is non-zero, it is undecidable as to whether a point inside $V_\epsilon$ has $\rho_L=0$. However, by the fractal nature of $\mathscr I_L$, we know such points exist. 

The results from Section \ref{andromeda} suggests the universe itself be considered a chaotic system. Consistent with the discussion above, we assume the universe is a deterministic chaotic dynamical system evolving precisely on its fractal invariant set $\mathscr I_U$, with invariant measure $\rho_U$. This is referred to as the invariant set postulate \cite{Palmer:2009a}. States $q$ which do not lie on $\mathscr I_U$ must be assigned a measure and hence a probability $\rho_U(q)=0$. It is worth noting that geometric properties of a system's invariant set (e.g. its non-integer dimension) provides a relativistically invariant description of chaos, in contrast with positivity of Lyapunov exponents \cite{Cornish:1997}. 

In number theory, Ostrowsky's theorem states that there are only two inequivalent norm-induced metrics on the rational numbers $\mathbb Q$: the Euclidean metric and the $p$-adic metric \cite{Katok}. It is well known that the set of $p$-adic integers is homeomorphic to a Cantor Set with $p$ iterated pieces \cite{Robert}.  We can therefore suppose that states in discretised Hilbert Space represent ensembles of trajectories on a fractal invariant set at some level of fractal iteration, where each trajectory at one level of iteration is associated with an ensemble of $p$ trajectories at the next level of fractal iteration. In this way, a deterministic state which does not satisfy the rationality constraints of RaQM corresponds to a state of the world which does not lie on the invariant set $\mathscr I_U$. In this picture, the measurement process corresponds to a jump from one fractal iteration of $\mathscr I_U$ to the next - that is to say a `quantum jump' describes an increment in fractal iteration. This suggests a picture of the evolution of time similar to a fractal zoom \cite{Susskind:2012}.

Although in a true fractal, the depth of iteration is infinite, the ideas expressed here continue to hold if the depth of iteration of $I_U$ is in fact finite. An invariant set with finite depth of iteration corresponds to a strictly periodic limit cycle. Computational representations of fractal attractors are in fact periodic limit cycles. We will assume below that $I_U$ is in fact a periodic limit cycle. 

As discussed above, although deterministic, such models are not classical. Classical models are associated with deterministic initial conditions and dynamical evolution equations expressed in terms of differential (or finite difference) equations on the reals or complex numbers. Typically one can vary initial conditions as one likes, and perturbed initial conditions can typically be integrated from the evolution equations without issue. In classical models, the ontology of states does not depend on their lying on invariant sets, nor on their having rational-number characteristics. This has consequences for our understanding of free will, discussed in Section \ref{other}. 

Notice that the results discussed here do not depend on how large is $p$, as long as it is not infinite. Moreover, by writing $\epsilon \sim 1/p$, it can be seen that violations to the rationality constraints in RaQM can be completely eliminated by coarse graining over $\epsilon$ disks, no matter how small is $\epsilon$.

\section{Bell's Theorem}
\label{belltheorem}

\subsection {The Bell (1964) Inequality}

In this subsection, we focus on the original Bell inequality \cite{Bell:1964}
\be
\label{bell}
 |C(\bold x_1, \bold x_2)-C(\bold x_1, \bold x_3)| \le 1+C(\bold x_2, \bold x_3)
\ee
For some specific run in an experiment to test (\ref{bell}), suppose Alice chooses the nominal orientation $\bold x_1$ and Bob the nominal orientation $\bold x_2$. In keeping with the discussion above, $\lambda$, $\bold x_1$ and $\bold x_2$ fix a pair of exact measurement settings $\bold X_{1}=\bold E(\lambda, \bold x_1)$ and $\bold X_{2}=\bold E(\lambda, \bold x_2)$. To satisfy (\ref{ang1}), $\bold X_{1} \cdot \bold X_{2} \in \mathbb Q$. Unlike the nominal settings, Alice and Bob have no control over exact settings, hence have no \emph{choice} as to whether this rationality constraint is satisfied. There are many points in the two $\epsilon$ disks for $\bold x_1$ and $\bold x_2$ which satisfy this constraint (see Fig \ref{fig1}). 

\begin{figure}
\centering
\includegraphics[scale=0.5]{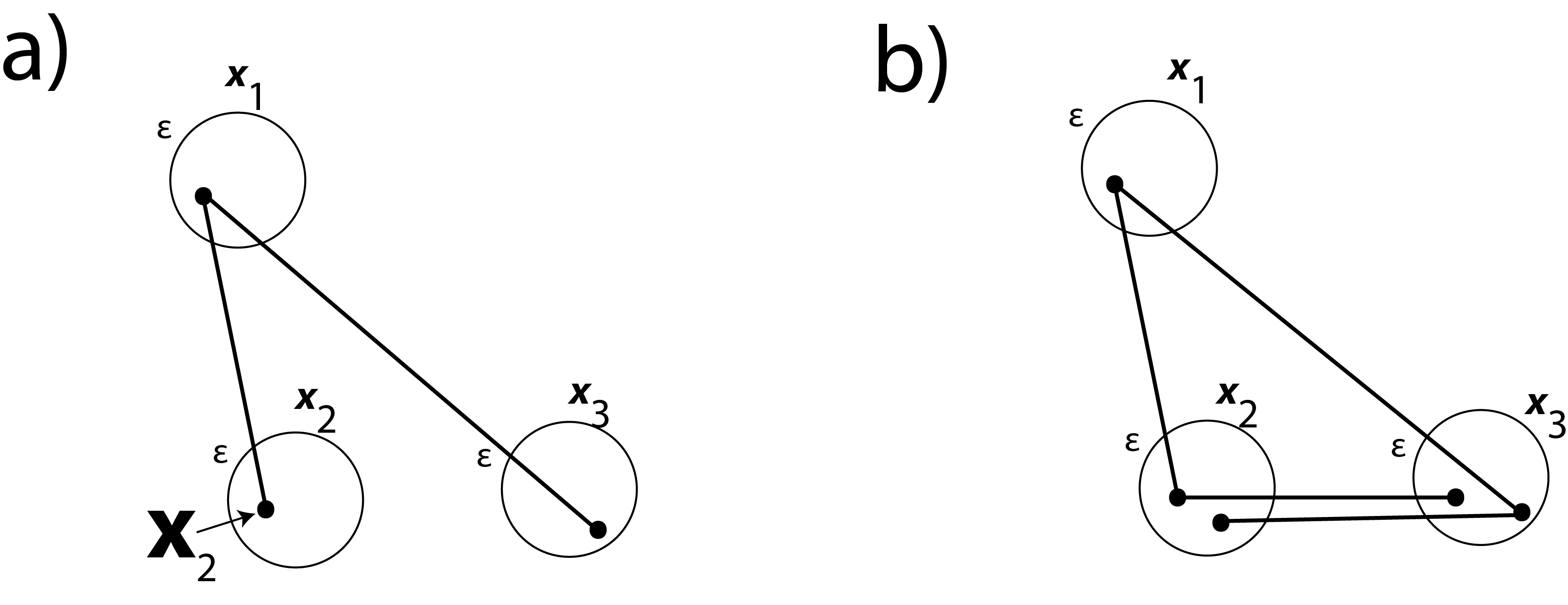}
\caption{\emph{a) The three circles correspond to $\epsilon$ disks on the unit sphere associated with the three nominal measurement settings $\bold x_1$, $\bold x_2$ and $\bold x_3$ in Bell's 1964 inequality. The straight lines represent great circles on the unit sphere whose cosine of angular distance is rational. By the Impossible Triangle Corollary, it is impossible for the cosine of all three angular distances (i.e. $\bold X_1 \cdot \bold X_2$,  $\bold X_1 \cdot \bold X_3$ and $\bold X_2 \cdot \bold X_3$) to be rational. Because of this, (\ref{statind}) is violated non-conspiratorially. b) In a more conventional model, there is no requirement for the exact settings to be held fixed when comparing real and hypothetical worlds with the same hidden variables and same nominal settings. In this model it is always possible to satisfy (\ref{statind}) and hence the violation of Bell inequalities must imply violation of local realism or some grotesque conspiracy.}}
\label{fig1}
\end{figure}

In order that a putative hidden-variable theory satisfies (\ref{bell}), it is necessary that, in addition to the real-world run where the particles were measured with nominal settings $(\bold x_1, \bold x_2)$, the same particles (same $\lambda$) could have been measured with nominal settings $(\bold x_1, \bold x_3)$ and $(\bold x_2, \bold x_3)$ with definite outcomes $\pm 1$. We will consider these hypothetical worlds in turn. 

For the first, keeping $\lambda$ fixed, Alice continues to choose the nominal setting $\bold x_1$, whilst Bob hypothetically chooses the nominal setting $\bold x_3$. Since $\bold X_1=\bold E(\lambda, \bold x_1)$, then keeping $\lambda$ and $\bold x_1$ fixed, $\bold X_1$ is fixed. By contrast, keeping $\lambda$ fixed but transforming $\bold x_2$ to $\bold x_3$ implies a hypothetical change in Bob's exact setting from $\bold X_2$ to some (unknown) $\bold X_3$. This transformation is consistent with RaQM as long as the rationality conditions (\ref{ang1}) and (\ref{ang2}) hold. It is therefore necessary that $\bold X_1 \cdot \bold X_3 \in \mathbb Q$ and that $\phi/2\pi \in \mathbb Q$ where $\phi$ denotes the angle between the great circles $X_1 X_2$ and $X_1 X_3$ at the point $X_1$ of intersection. There are plenty of exact settings $X_3$ in the $\epsilon$ neighbourhood for $\bold x_3$ for which these rationality constraints are satisfied. 

But now, c.f. the third term in (\ref{bell}), we consider the hypothetical world where, keeping $\lambda$ fixed, Alice chooses the nominal direction $\bold x_2$ and Bob $\bold x_3$. With $\bold X_3= \bold E(\lambda, \bold x_3)$, $\bold X_3$ is fixed by its value for the first hypothetical world. We now invoke the key property of the singlet state: if the measurement outcome was $+1$ (say) when Bob's exact setting was $\bold X_2$, then the measurement outcome will be $-1$ in a hypothetical world where Alice's exact setting was $\bold X_2$. Hence, considering the exact settings corresponding to the third term in (\ref{bell}), $\bold X_2$ and $\bold X_3$ must both be held fixed at their previously determined values. However, appealing to the Impossible Triangle Corollary, if $\bold X_1 \cdot \bold X_2$ and $\bold X_1 \cdot \bold X_3$ are rational and if $\phi$ is rational and not \emph{precisely} a multiple of $45^\circ$ (gravitational waves from Andromeda will help ensure that), then $\bold X_2 \cdot \bold X_3$ cannot be rational. See Fig \ref{fig1}. 

In essence, for each run in a Bell experiment, one of the two counterfactual runs is inconsistent with the rationality constraints of the hidden variable model and therefore must be assigned a probability $\rho=0$. Put another way
\be
\label{three}
\rho(\lambda | \bold x_1, \bold x_2) \times \rho(\lambda | \bold x_1, \bold x_3) \times \rho(\lambda | \bold x_2, \bold x_3)= 0
\ee
where, to emphasise, all orientations in (\ref{three}) are nominal. If one configuration occurs in reality (so that its probability is not identically zero) then (\ref{three}) implies that (\ref{statind}) is violated. 

\subsection{The CHSH Inequality}

This is a straightforward extension of the argument above though we no longer use the singlet property that, with the same exact settings, Alice and Bob must have opposite measurement outputs. Here $x$, $x'$, $y$ and $y'$ (with $x'=1-x$, $y'=1-y$ as before) denote four $\epsilon$ disks (i.e. nominal settings) on the 2-sphere of orientations. In a given run where Alice chooses $x$ and Bob $y$, we write the corresponding exact settings as $\bold X$, $\bold Y$ where  
\be
\label{rat1}
\bold X \cdot \bold Y \in \mathbb Q
\ee 
In order that a putative hidden-variable theory satisfies (\ref{be}), it is necessary that, in addition to the real-world run, the same particles (with the same $\lambda$) could have been measured with nominal settings $xy'$, $x'y$ and $x'y'$, with definite outcomes $\pm 1$.  

\begin{figure}
\centering
\includegraphics[scale=0.6]{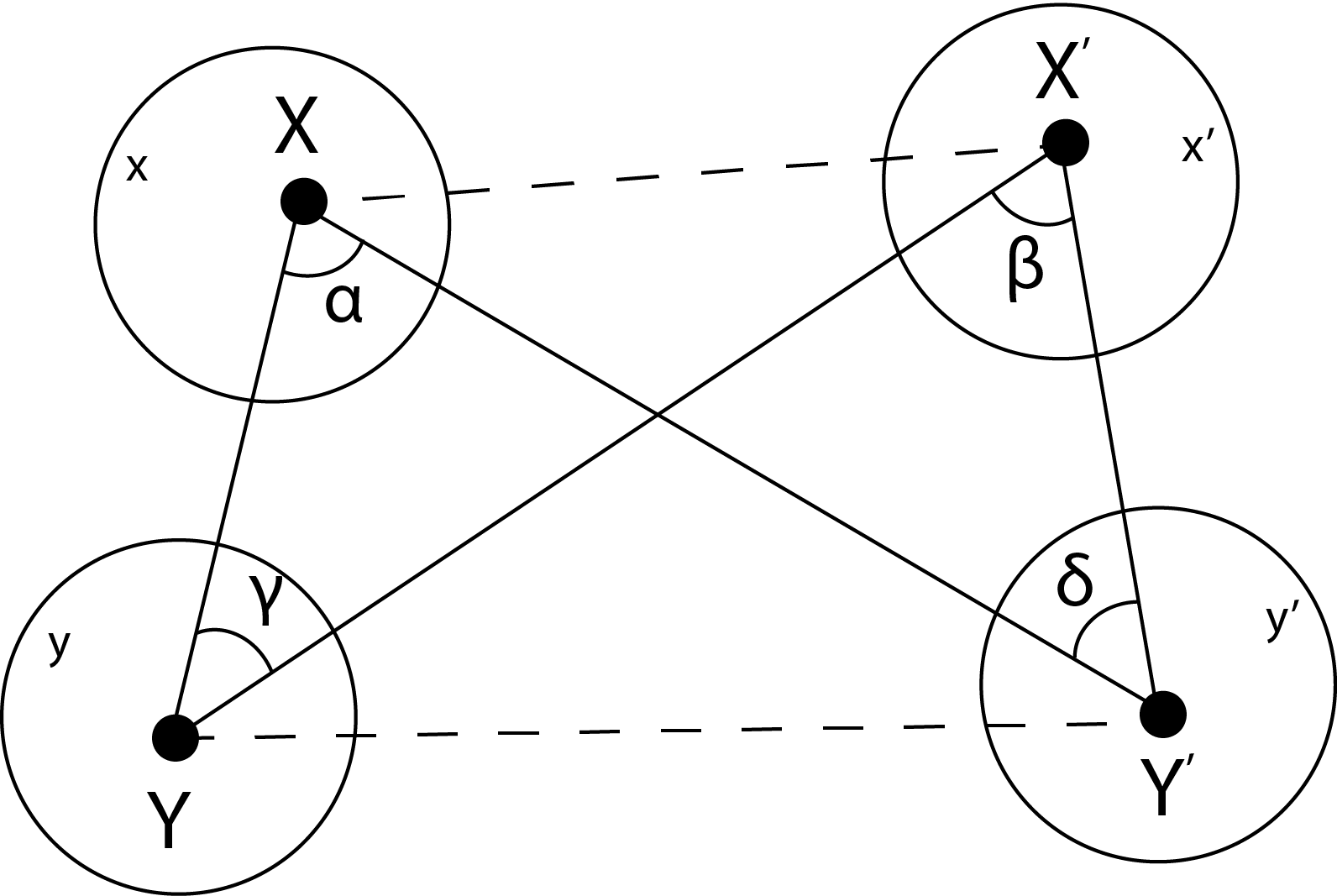}
\caption{\emph{Illustrating the CHSH experiment where $x$, $x'$, $y$ and $y'$ denote $\epsilon$-disks associated with nominal measurement settings, under the control of the experimenters.  $X$, $X'$, $Y$ and $Y'$ are points on the unit 2-sphere corresponding to exact measurement settings, and straight lines correspond to great circles joining these points. By the relationship $\bold X= \bold E(\lambda, \bold x)$ and $\bold Y = \bold E(\lambda, \bold y)$, fixing $\lambda$ and a nominal measurement setting fixes an exact measurement setting.}As discussed in the text, if Alice and Bob chose the nominal settings $xy$, then, by the Impossible Triangle Corollary with $\lambda$ fixed, the exact settings associated with one of the three counterfactual nominal settings $xy'$, $x'y$ or $x'y'$ cannot satisfy the rationality conditions of RaQM.}
\label{fig2}
\end{figure}

As before, with $\bold X=\bold E(\lambda, \bold x)$,  $\bold X'=\bold E(\lambda, \bold x')$, $\bold Y=\bold E(\lambda, \bold y)$, $\bold Y'=\bold E(\lambda, \bold y')$, we seek two exact settings $\bold X'$, $\bold Y'$ such that:
\be
\label{rat2}
\bold X \cdot \bold Y' \in \mathbb Q;\ \ \bold X' \cdot \bold Y \in \mathbb Q;\ \ \bold X' \cdot \bold Y' \in \mathbb Q
\ee
and, see Fig \ref{fig2}, 
\be
\label{rat3}
\frac{\alpha}{2\pi} \in \mathbb Q; \ \ \frac{\beta}{2\pi} \in \mathbb Q; \ \ \frac{\gamma}{2\pi} \in \mathbb Q; \ \ \frac{\delta}{2\pi} \in \mathbb Q.
\ee
By repeated application of the Impossible Triangle Corollary, we now show it is impossible to satisfy (\ref{rat2}) and (\ref{rat3}). 

We do this again by contradiction. Suppose that (\ref{rat2}) and (\ref{rat3}) are satisfied and consider the two triangles $\triangle XX'Y'$ and $\triangle XX'Y$. By the cosine rule for spherical triangles for each of the two triangles
\begin{align}
\label{cos1}
\cos \theta_{XX'}=& \cos \theta_{XY} \cos \theta_{X'Y}+\sin \theta_{XY} \sin \theta_{X'Y} \cos \gamma \nonumber \\
\cos \theta_{XX'}= &\cos \theta_{X'Y'} \cos \theta_{XY'}+\sin \theta_{X'Y'} \sin \theta_{XY'} \cos \delta
\end{align}
Subtracting these equations then 
\be
A=\sin \theta_{XY} \sin \theta_{X'Y} \cos \gamma - \sin \theta_{X'Y'} \sin \theta_{XY'} \cos \delta
\ee
must be rational. Writing $A_1=\sin \theta_{XY} \sin \theta_{X'Y} \cos \gamma$, $A_2=\sin \theta_{X'Y'} \sin \theta_{XY'} \cos \delta$, then we can write
\be
A_1^2= r_1(1+\cos 2 \gamma); \ A_2^2=r_2(1+\cos 2\delta)
\ee
where $r_1, r_2$ are rational. However, by Niven's Theorem, providing $\gamma$ and $\delta$ are not precise multiples of $45^\circ$, $A_1^2$ and $A_2^2$ must be irrational, hence $A_1$ and $A_2$ must be irrational. Moreover they must be independently irrational since $\gamma$ and $\delta$ can be varied independently of one another. Hence generically $A_1-A_2=A$ must be irrational which is the contradiction we are looking for. 

This in turn leads to the following general conclusion. In the situation where Alice chose $x$ and Bob $y$, then, keeping the particles' hidden variables fixed, at least one of the three counterfactual choices must violate the rationality conditions: 1) Alice and Bob chose $x$ and $y'$, Alice and Bob chose $x'$ and $y$, or Alice and Bob chose $x'$ and $y'$. Similar to (\ref{three})
\be
\label{four}
\rho(\lambda | xy) \times \rho(\lambda | xy') \times \rho(\lambda | x'y) \times \rho(\lambda, x'y') = 0
\ee
If $\rho(\lambda | xy) \ne 0$, i.e. one configuration occurs in reality then (\ref{four}) implies that (\ref{statind}) is violated. As with Bell's 1964 inequality, the Impossible Triangle Corollary implies that (\ref{statind}) is violated without conspiracy.

It can be noted that since RaQM is based on an arbitrarily fine discretisation of complex Hilbert Space, by letting $p$ be sufficiently large, RaQM must violate Bell's inequality as closely we like to the quantum mechanical violation of Bell's inequality. 

\subsection{Local Causality}
\label{lc}

Below we consider a locally causal violation of the rationality constraints (\ref{rat1}) and (\ref{rat2}). 

Fig \ref{fig3} illustrates a space-time diagram where a photonic source emits two entangled photons. $\mathcal S$ is a spacelike hypersurface through the event where the particles were emitted by the source. The photons are measured by Alice and Bob's detectors with nominal settings $x$ and $y$ and exact settings $X$ and $Y$. The nominal settings are determined by two PRNGs shown in the figure. By the discussion above, the particle's hidden variables $\lambda$, together with the parities $P_x$ and $P_y$, determine these exact settings. We suppose that, consistent with the counterfactual violation of  (\ref{rat1}) and (\ref{rat2}), $\bold X \cdot \bold Y \in \mathbb Q$ but $\bold X \cdot \bold Y' \notin \mathbb Q$ and $\bold X' \cdot \bold Y' \notin \mathbb Q$

We divide $\lambda$ into two components, $\lambda_A$ corresponding to information in the past light cone of Alice's measurement event, and $\lambda_B$ corresponding to information in the past light cone of Bob's measurement event. We write Alice and Bob's measurement outcomes ($=\pm1$) in the form:
\be
\label{SASB}
S_A=S(\lambda_A, \lambda_B, x, y); \ \ S_B=S(\lambda_A, \lambda_B, x, y)
\ee
Whatever the values of $S_A$ and $S_B$ in (\ref{SASB}), it is \emph{never} the case that
\begin{align}
\label{lc1}
S(\lambda_A, \lambda_B, x, y')&=-S_A; \ \ S(\lambda_A, \lambda_B, x', y)=-S_B 
\nonumber \\
S(\lambda_A, \lambda'_B, x, y)&=-S_A; \ \ S(\lambda'_A, \lambda_B, x, y)\ =-S_B
\end{align}
where $\lambda'_A$, $\lambda'_B$ describe perturbed data on $\mathcal S$ (see Fig \ref{fig3}). That is to say, a hypothetical intervention \emph{in} space-time which alters either Bob's PRNG input or the hidden variable $\lambda_B$ in the past light cone of Bob's measurement event, keeping $P_x$ and $\lambda_A$ fixed, and changes Alice's measurement outcome, violates the rationality conditions (\ref{rat1}) and (\ref{rat2}). In simple words this implies that Alice's measurement outcome does not depend on Bob's measurement settings - the essence of locality.  The space-time events at $B$ are irrelevant for determining the event $A$ because any intervention which changes $B$ keeping the data on $\mathcal S_A$ fixed is inconsistent with the putative model and hence does not define a putative event in space time. But this is precisely Bell's characterisation of local causality in La Nouvelle Cuisine  - see Fig 4 in \cite{Bell:2004}. Put another way, the event $A$ is entirely determined by data on $\mathcal S_A$, the subset of $\mathcal S$ contained on and in the past lightcone of $A$ - the essence of relativistic causality. 

\begin{figure}
\centering
\includegraphics[scale=0.5]{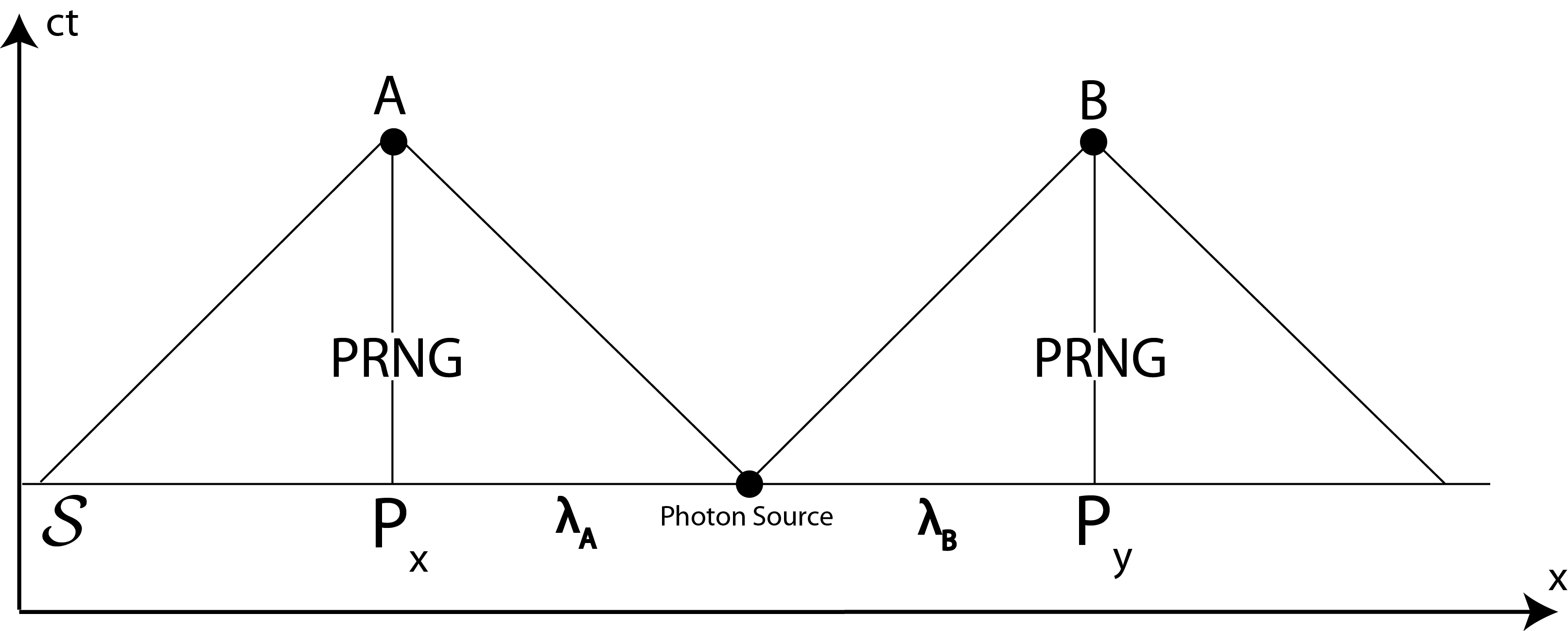}
\caption{\emph{A space-time diagram illustrating the locally causal nature of the proposed superdeterministic model based on RaQM. Suppose Alice's measurement outcome is $+1$. This measurement outcome could have been $-1$ if $P_x$ (the parity of the millionth digit) were different or if $\lambda_A$ were different. However, local causality demands that Alice's measurement outcome could not have been $-1$ if $P_y$ were different or $\lambda_B$ was different, keeping $P_x$ and $\lambda_A$ fixed. Importantly, this does not exclude the possibility that Alice's measurement outcome would be inconsistent with the laws of physics, and hence is undefined, if $P_y$ had been different or $\lambda_B$ had been different, keeping $P_x$ and $\lambda_A$ fixed. According to RaQM and the invariant set postulate, such undefinedness arises because the corresponding Pearlean interventions are inconsistent with rationality constraints and take the universe off its invariant set. Because of this, the event $A$, whilst determined by data on and inside the past light cone of $A$, does not depend on space-time events at or near $B$, consistent with Bell's notion of locality in La Nouvelle Cuisine.}}
\label{fig3}
\end{figure}

\section{Objections to Superdeterminism}
\label{other}
Below we address some of the objections that have been raised against superdeterminism with the RaQM/invariant set model in mind. 

\subsection{The Drug Trial}
\label{drug}

Each human in a drug trial is unique; having unique DNA for example. However, the characteristics used to allocate people randomly to the active drug or placebo groups are based on coarse-grain attributes. Are they young or old? Are they male or female? Are they black or white? We typically assume the two groups contain equal numbers of such coarse-grain attributes. In a conspiratorial drug trial, the selection process is manipulated so that the two groups do not contain equal numbers of coarse-grain attributes. 

Although RaQM violates (\ref{statind}), it does not violate any version of (\ref{statind}) where coarse-grained hidden variables are used in place of hidden variables. For large enough $p$, it is always possible to find counterfactual worlds which satisfy the rationality constraints in a coarse-grained volume $V_\epsilon$ of state space, no matter how small is $\epsilon$. That is to say, in any sufficiently-large ensemble of individual runs, 
\be
\label{freqind}
\rho_f(\bar \lambda | xy)=\rho_f(\bar \lambda | x'y)=\rho_f(\bar \lambda | xy')=\rho_f(\bar \lambda | x'y')=\rho_f(\bar \lambda)
\ee
where $\bar \lambda$ denotes a coarse-grained value of $\lambda$ and $\rho_f$ denotes a frequency of occurrence. In this sense, (\ref{freqind}) does not imply (\ref{statind}) and the drug trial conspiracy is an irrelevance to RaQM.

\subsection{Fine Tuning}

The fine-tuned objection (e.g. \cite{Baas:2020}) rests on the notion that superdeterminism appears to require some special, atypical, initial conditions. Perhaps one might view an initial state lying on a fractal attractor as special and atypical - after all a seemingly tiny perturbation (changing $P_x$ keeping $\lambda$ fixed) can take the state of the universe off its invariant set $\mathscr I_U$. Although the Euclidean metric accurately describes distances in space-time, the $p$-adic metric is more natural in describing distances in state space when the inherent geometry of state space is fractal \cite{Katok}. From the perspective of the $p$-adic metric, a fractal invariant set is not fine-tuned: a perturbation which takes a point off $\mathscr I_U$ is a large perturbation (of magnitude at least $p$), even though it may appear very small from a Euclidean perspective.  Conversely, perturbations which map points of $\mathscr I_U$ to points of $\mathscr I_U$ can be considered small amplitude perturbations. 

Similarly, we must ask with respect to what measure are states on $\mathscr I_U$ deemed atypical. Although states on $\mathscr I_U$ are atypical with respect to a uniform measure on the Euclidean space in which $\mathscr I_U$ is embedded, they are manifestly typical with respect to the invariant measure of $\mathscr I_U$ \cite{Hance2022Supermeasured}.   

In claiming that a theory is fine tuned, one should first ask with respect to which metric/measure is the tuning deemed fine -  and then ask whether this the natural metric/measure to assess fineness. 

\subsection{Singular Limits and Aaronson's Challenge}
\label{aaronson}

Aaronson's challenge (see the Introduction) raises a more general question: what is the relationship between a successor theory of physics and its predecessor theory? There is a subtle but important relationship brought out explicitly by Michael Berry \cite{Berry:2002} that is of relevance here. 

Typically an old theory is a singular limit of a new theory, and not a smooth limit, as a parameter of the new theory is set equal to infinity or zero. A singular limit is one where some characteristic of the theory changes discontinuously \emph{at} the limit, and not continuously \emph{as} the limit is approached. Berry cites as examples the old theory of ray optics is explained from Maxwell theory, or the old theory of thermodynamics is explained from statistical mechanics. His claim is that old theories of physics are \emph{typically} singular limits of new theories.

If quantum theory is a forerunner of some successor superdeterministic theory, and Berry's observation is correct, quantum mechanics is likely to be a singular limit of that superdeterministic theory. Here the state space of quantum theory arises at the limit $p=\infty$ of RaQM, but not before. For any finite $p$, no matter how big, the incompleteness property that led to the violation of (\ref{statind}) holds. However, it does not hold at $p=\infty$. From this point of view, quantum mechanics is indeed a singular limit of RaQM's discrete Hilbert space, at $p=\infty$. It is interesting to note that pure mathematicians often append the real numbers to sets (`adeles') of $p$-adic numbers, at $p=\infty$. However, the properties of $p$-adic numbers are quite different to those of the reals for any finite $p$ no matter how big. Here the real-number continuum is the singular limit of the $p$-adics at $p=\infty$. The relationship between QM and RaQM is very similar.

In physics, it is commonplace to solve differential equations numerically, i.e. to treat discretisations as approximations of some continuum exact equations, such that when the discretisation is fine enough, the numerical results are as close as we require to the exact continuum solution. This is \emph{not} a good analogy here. A better analogy is analytic number theory, considered as an approximation to say the exact theory of prime numbers. If one is interested in properties of primes for large primes, treating $p$ as if it were a continuum variable can provide excellent results. However, here the continuum limit is the approximation and not the exact theory. 

Contrary to Aaronson's statements, the singular relationship between a superdeterministic theory and quantum mechanics is exactly as one would expect from the history of science. 

\subsection{Free Will}

Superdeterminism is often criticised as denying experimenter free will. Nobel Laureate Anton Zeilinger put it like this \cite{Zeilinger}:
\begin{quote}
We always implicitly assume the freedom of the experimentalist... This fundamental assumption is essential to doing science. If this were not true, then, I suggest, it would make no sense at all to ask nature questions in an experiment, since then nature could determine what our questions are, and that could guide our questions such that we arrive at a false picture of nature.
\end{quote}
A clear problem here is that the notion of free will is poorly understood \cite{Kane} and therefore hard to define rigorously. It was for this reason that Bell introduced his PRNG gedanken experiment - to show that it was possible to discuss (\ref{statind}) and its potential violation without invoking free will. 

Nevertheless, to avoid the charge of conspiracy, an experimenter must be able to choose in a way which is indistinguishable from a random choice. For the present purposes we can think of this as being consistent with free will. For this reason, experimenters have found increasingly whimsical ways of choosing measurement settings - such as bits from a movie, or the wavelength of light from a distant quasar - in an attempt to mimic randomness.  

In Fig \ref{fig4}  we replace the two PRNGs in Fig \ref{fig3} by Alice and Bob's brains. It is well known that brains are low-power noisy systems \cite{RollsDeco} \cite{Palmer:2022b} where neurons can be stochastically triggered. In practice, the source of such stochasticity is thermal noise. However, such noise - arising from the collision of molecules - will have an irreducible component due to the Andromedan butterfly effect.  In such circumstances, as discussed in Section \ref{andromeda}, this can render the action of the brain non-computational. In particular, if we were to construct a model of Alice and Bob's brains which are driven by the subset $\lambda$ of data on $\mathcal S$ (or indeed any subset), this model will not provide reliable predictions of their brains' decisions. This surely provides evidence of our ability to choose in ways which are for all practical purposes random. 

\begin{figure}
\centering
\includegraphics[scale=0.5]{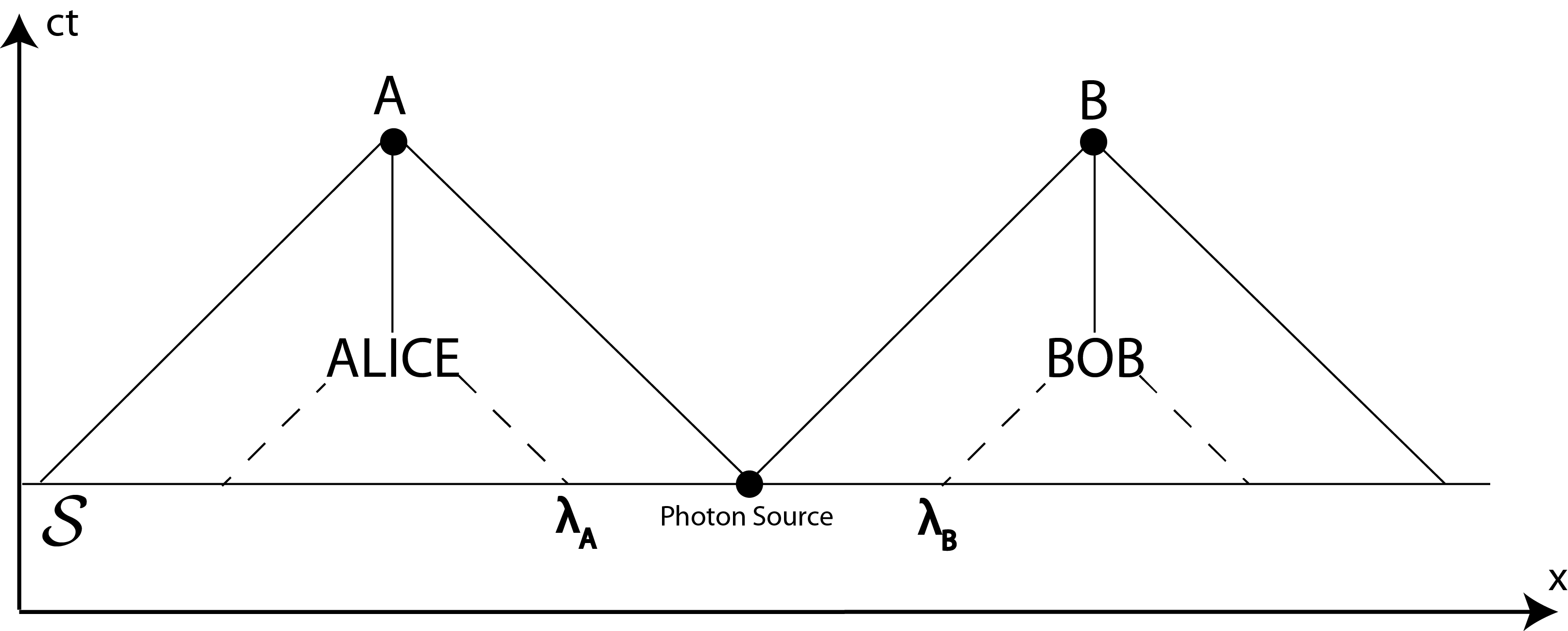}
\caption{\emph{Here the PRNGs of Fig \ref{fig3} are replaced with the experimenters' brains. By the Andromedan butterfly effect, we can assume that the outputs of these brains is sensitive to all data on $\mathcal S$ in ALICE and BOB's past light cones and not just $\lambda$. We can use this to assert that Alice and Bob's brains cannot be corrupted by the values of the hidden variables.}}
\label{fig4}
\end{figure}
 
However, the Invariant Set Postulate provides insights which may help shed new light on the age-old dilemma of free will. Consider two possible invariant sets $\mathcal I_U$ and $\mathcal I_{U'}$. Here, for given $\lambda$, we suppose $\mathcal I_U$ permits the settings $00$ and $11$ in a Bell experiment, whilst, keeping $\lambda$ fixed, $\mathcal I_{U'}$ permits the settings $01$ and $10$. These invariant sets differ (very slightly) in terms of their geometries, i.e. in terms of the (underlying deterministic) laws of physics. 

If Alice and Bob are free to choose their measurement settings, then, prior to their choosing, all observations of the universe must be consistent with the universe belonging to either of the $\mathcal I_U$ and $\mathcal I_{U'}$. However, once Alice and Bob have chosen, not only is one of $\mathcal I_U$ and $\mathcal I_{U'}$ consistent with available observations, the other becomes inconsistent with the supposed deterministic laws of physics. 

Suppose Alice and Bob chose $00$ and let $q$ denote a state of the universe prior to Alice and Bob choosing. The question then arises: was it always the case - even before Alice and Bob chose $00$ -  that $q \in \mathcal I_U$ and not $\mathcal I_{U'}$? Or did the very act of Alice and Bob choosing result in $q \in \mathcal I_U$ rather than $q \in \mathcal I_{U'}$? In thinking about these questions, it is important to distinguish determinism from pre-destination. In a conventional deterministic initial-value problem, initial conditions are specified independently from evolutionary laws and the evolved state is predestined from the initial state. By contrast, if  one thinks of the geometry of the invariant set as primitive, then the choices Alice and Bob make are no more `predestined' from $q$ than $q$ is predestined from their choices. Instead, all one can say is that, because of determinism, the earlier and later states must be dynamically consistent. Importantly, as a global state space geometry, $\mathcal I_U$ is consistent with what Adlam calls an `all at once' constraint \cite{Adlam:2022}. One could say that the geometric specification of $\mathcal I_U$, and hence whether $q \in \mathcal I_U$, depends as much on states on $\mathcal I_U$ to the future of $q$ as on states to the past of $q$. This future/past duality exists because neither the proposition $q \in \mathcal I_U$, nor Alice and Bob's choice given $q$, is computational (in our finite system it is computationally irreducible). 

Hence, is it both true that $q \in \mathcal I_U$ (rather than $\mathcal I_{U'}$) before Alice and Bob chose, and also true that the act of choosing required $q \in \mathcal I_U$ rather than $q \in \mathcal I_{U'}$. Importantly, as there is no notion of temporal causality in state space, it would be wrong to call this latter fact retrocausality. Simply, it is a consequence of $\mathcal I_U$ being an all-at-once constraint. This type of analysis helps explain the so-called delayed choice paradoxes in quantum physics. 

It is possible to conclude not only that experimenter choices are indeed freely made (Nobel Laureates can be assured that their brains are not being subverted), but that these choices can determine which states of the universe are consistent with the laws of physics and which not (surely a fitting role for Nobel Laureates). This has some significant implications for the role of intelligent life in the universe more generally, which the author will discuss elsewhere. 

\section{Experimental Tests}
\label{experiment}

A key result from this paper is that we will not be able to detect non-conspiratorial superdeterministic violations of (\ref{statind}) by studying frequencies of measurement outcomes in a Bell experiment. We must look for other ways of testing such theories. 

Of course, QM is exceptionally well tested and if a superdeterministic theory is to replace QM, it must clearly be consistent with results from all the experiments which support QM. Here RaQM has a free parameter $p$, which, if large enough, can replicate all existing experiments. This is because with large enough $p$, discretised Hilbert space is fine enough that it replicates to experimental accurace the probabilistic predictions of a theory based on continuum Hilbert space (and Born's Rule to interpret the squared modulus of a state as a probability - something automatically satisfied in RaQM). Conversely, however, if $p$ is some finite albeit large number, then in principle an experiment with free parameter $p_{exp}$ can study situations where $p_{exp} > p$ where there might be some departure between reality and QM \cite{HanceHossenfelder:2022a}.

One conceivable test of RaQM vs QM probes the finite amount of information that can be contained in the quantum state vector $|\psi\rangle$ (in RaQM $n$ qubits are represented by $n$ bit strings of length $p$). In RaQM, the finite information encoded in the quantum state vector will limit the power of a general purpose quantum computer, in the sense that RaQM predicts the exponential increase in quantum compute speed with qubit number for a range of quantum algorithms may generally max out at a finite number $m$ of qubits. 

The key question concerns the value of $m$. Could it be related to the number of collisions in the Andromeda butterfly effect? We will return to this elsewhere. 

\section{Conclusions}
\label{conclusions}

Attempts to develop models which violate (\ref{statind}) do not justify the derision from a number of researchers in the quantum foundations community over the years. Not least, Bell himself did not treat the possible violation of (\ref{statind}) with derision and accepted that seemingly reasonable ideas about the properties of physical randomisers might be wrong - for the purposes at hand. . 

A superdeterministic (and hence deterministic) model has been proposed which is not conspiratorial, is locally causal, does not deny experimenter free choice and is not fine tuned with respect to natural metrics and measures. The model, based on a discretisation of Hilbert space, is not a classical hidden-variable model, i.e. it derives its properties from post-quantum-theory mathematical science (particularly that of non-computability and computational irreducibility). By considering the continuum of complex Hilbert Space as a singular limit of a superdeterministic discretisation of complex Hilbert Space, Aaronson's challenge to superdeterminists, to show how a superdeterministic model might gloriously explain quantum mechanics, can be met. 

One of the most important conclusions of this paper is that we need to be extremely cautious when invoking the notion of an `intervention' in space time, at least in the context of fundamental physics. Such interventions form the bedrock of Pearl's causal inference modelling \cite{Pearl}, and causal inference has been used widely in the quantum foundations community to try to analyse the causal structure of quantum physics e.g. \cite{Spekkens} \cite{PriceWharton}. Here we distinguish between two types of intervention: one that is consistent with the laws of physics and one that is not. The effect of the former type of intervention, if it is initially contained within a localised region of space-time, must propagate causally in space-time, constrained by the Lorentzian metric of space time. By contrast, the latter type of intervention simply perturbs a state of the universe from a part of state space where the laws of physics hold, to a part of state space where the laws of physics do not hold. If this superdeterministic model is correct, theories of quantum physics based on causal inference models which adopt an uncritical acceptance of interventions will give misleading results. 

The results of this paper suggest that the way gravity interacts with matter may be central to understanding the reasons why the universe can be considered a holistic dynamical system evolving on an invariant set, and hence why Hilbert space should be discretised. This suggests that instead of looking for a quantum theory of gravity, we should instead be looking for a gravitational theory of the quantum \cite{Palmer:2012} \cite{Penrose:2014}. However, importantly, the results here suggest such a theory will not be found by probing smaller and smaller regions of space-time, ultimately the Planck scale. It will instead be found by incorporating into the fundamental laws of physics, the state-space geometry of the universe at its very largest scales \cite{Palmer:2022b}. Planck-scale discontinuities in space-time may instead be an emergent property of such (top-down) geometric laws of physics. 

In this regard, a recent proposal \cite{Oppenheim:2023} for synthesising quantum and gravitational physics describes gravity as a classical stochastic field. The latter is consistent with our discussion of the Andromedan butterfly effect. However, it is not consistent with our discussion of Bell's inequality. On the other hand, if one simply acknowledges that a stochastic gravitational field is a `for all practical purposes' representation of a chaotic system evolving on a fractal invariant set, then Oppenheim's model may become consistent with both the proposed superdeterministic violation of Bell's inequality and with realism and the relativistic notion of local causality.  

In the author's opinion, this is the real message - not non-locality, indeterminism or unreality - behind the violation of Bell's inequality. 

\section*{Acknowledgements}

My thanks to Emily Adlam, Jean Bricmont, Harvey Brown, Michael Hall, Jonte Hance, Inge Svein Helland, Sabine Hossenfelder, Tim Maudlin and Chris Timpson for helpful discussions and/or useful comments on an early draft of this paper. The author gratefully acknowledges the support of a Royal Society Research Professorship. 

\bibliography{mybibliography}
\end{document}